\documentclass[preprint,12pt]{elsarticle}

\usepackage{amssymb}
\usepackage{amsmath}

\usepackage{amssymb,amsmath,amsfonts,amscd,amsthm,amsbsy}
\usepackage{braket,algorithm,algorithmic}
\usepackage{upgreek,comment,color}
\usepackage{bbm} 
\usepackage{enumitem}
\usepackage[colorlinks,citecolor=black,linkcolor=black,urlcolor=blue]{hyperref} 
\usepackage{subcaption}
\usepackage{booktabs}
\usepackage{graphics}

\newtheorem{corollary}{Corollary}
\newtheorem{proposition}{Proposition}

\newcommand{\given}{\;|\;}

\newcommand{\ma}{\mathbf{A}}
\newcommand{\mb}{\mathbf{B}}

\newcommand{\md}{\mathbf{D}}

\newcommand{\mg}{\mathbf{G}}
\newcommand{\mh}{\mathbf{H}}
\newcommand{\mi}{\mathbf{I}}

\newcommand{\mr}{\mathbf{R}}
\newcommand{\ms}{\mathbf{S}}

\newcommand{\mU}{\mathbf{U}}
\newcommand{\mv}{\mathbf{V}}
\newcommand{\mw}{\mathbf{W}}
\newcommand{\mx}{\mathbf{X}}

\newcommand{\my}{\mathbf{Y}}

\newcommand{\sigmaminplus}{\sigma^{+}_{\min}}

\newcommand{\mwln}{\mw_{\text{LN}}}
\newcommand{\mwls}{\mw_{\text{LS}}}

\newcommand{\vc}{\mathbf{c}}

\newcommand{\vr}{\mathbf{r}}

\newcommand{\vu}{\mathbf{u}}
\newcommand{\vv}{\mathbf{v}}
\newcommand{\vw}{\mathbf{w}}
\newcommand{\vx}{\mathbf{x}}

\newcommand{\vy}{\mathbf{y}}

\DeclareMathOperator{\trace}{\mathrm{trace}}

\DeclareMathOperator{\Exp}{\mathbb{E}}
\newcommand{\Tra}{^{\sf T}} 
\newcommand{\Inv}{^{-1}} 
\newcommand{\Dag}{^{\dagger}} 

\DeclareMathOperator{\rank}{\mathrm{rank}}

\newcommand{\vone}{{\bf 1}}

\newcommand{\mPxi}{\mathbf{P}_{\vx_{i}}}
\def\span{\mathrm{span}} 
\newcommand{\vgamma}{\boldsymbol{\gamma}}
\newcommand{\vtheta}{\boldsymbol{\theta}}

\begin{document}

\begin{frontmatter}


\author{Jocelyn T. Chi\corref{cor1}\fnref{label2}}
\ead{jocelync@umn.edu}
\affiliation{organization={University of Minnesota Twin Cities},
             city={Minneapolis},
             postcode={55455},
             state={MN},
             country={USA}}

\title{Large Scale High-Dimensional Reduced-Rank Linear Discriminant Analysis}




\begin{abstract}
Reduced-rank linear discriminant analysis (RRLDA) is a foundational method of dimension reduction for classification that has been useful in a wide range of applications.  The goal is to identify an optimal subspace to project the observations onto that simultaneously maximizes between-group variation while minimizing within-group differences.  The solution is straight forward when the number of observations is greater than the number of features but computational difficulties arise in both the high-dimensional setting, where there are more features than there are observations, and when the data are very large.  Many works have proposed solutions for the high-dimensional setting and frequently involve additional assumptions or tuning parameters.  We propose a fast and simple iterative algorithm for both classical and high-dimensional RRLDA on large data that is free from these additional requirements and that comes with guarantees.  We also explain how RRLDA-RK provides implicit regularization towards the least norm solution without explicitly incorporating penalties.  We demonstrate our algorithm on real data and highlight some results.
\end{abstract}



\begin{keyword}
Classification \sep Discriminant analysis \sep Dimension reduction \sep Randomized Kaczmarz method \sep Least squares \sep Matrix regression \sep Implicit regularization

\MSC 15A06 \sep 65F10 \sep 62H30 \sep 62H25 \sep 62J05

\end{keyword}

\end{frontmatter}




\section{Introduction}

Reduced-rank linear discriminant analysis \cite[Section 4.3.3]{esl} (RRLDA) is a foundational method of dimension reduction for downstream classification \cite{duda2000pattern, fukunaga1993statistical, esl}.  RRLDA has many names in the literature and is also widely known as classical linear discriminant analysis (LDA),  multiclass Fisher's LDA (FLDA), and multiple discriminant analysis.  The goal of RRLDA is to identify an optimal subspace for projecting observations onto that simultaneously maximizes between-group variation while minimizing within-group differences.  After this transformation, the reduced dimension data are well-suited for downstream classification tasks.  RRLDA and its variants have been a popular tool for dimension reduction for classification in a wide range of applications.  These include facial recognition \cite{liu2002gabor, liu1998enhanced, liu2004improving, huang2012linear}, hyperspectral imaging \cite{bandos2009classification, du2007modified, du2001linear}, speech and music classification  \cite{alexandre2005application}, medical diagnoses \cite{chen2011new, liu2020near}, magnetic resonance imaging classification \cite{ni2020prediction, witjes2003multispectral, lin2010automated}, and human detection in thermal video imaging \cite{sharma2016fisher}.

The RRLDA subspace is computationally straightforward to obtain when the number of observations is greater than the number of features but computational difficulties arise in two situations.  The first is the high-dimensional setting, where there are many more features than there are observations.  The second is the very large data situation, in which both the number of observations and the number of features are very large.  Many works propose solutions for the high-dimensional setting and frequently involve additional assumptions or tuning parameters.

We propose a scalable randomized iterative algorithm for performing RRLDA on large and high-dimensional data that is free from simplifying assumptions and that also comes with guarantees.  We discuss some intuition for why this algorithm might work on high-dimensional data without explicit regularization.  We also demonstrate the utility of this algorithm on real large and high-dimensional datasets on a number of downstream classification tasks.

\subsubsection{Overview}

In the rest of this section, we first set some notation and abbreviations.  In Section \ref{sec:rrlda}, we briefly introduce the RRLDA problem and discuss related works and computational issues in the high-dimensional data regime.  In Section \ref{sec:lsLDA}, we present the least squares formulation of RRLDA.  In Section \ref{sec:rk}, we briefly review the randomized Kaczmarz method and then apply it to the least squares formulation of RRLDA in Section \ref{sec:rrlda-rk}.  We also present convergence analysis and intuition on how the algorithm imparts implicit regularization towards the least norm solution.  Finally, we conclude with numerical experiments on real datasets in Section \ref{sec:experiments} and a brief discussion in Section \ref{sec:conclusion}.

\subsection{Notation and Abbreviations}

We define matrices by boldface capital letters $\mx$, vectors by boldface lowercase letters $\vx$, and scalars by lowercase letters $y$.  We assume that all vectors are column vectors.  We denote the Euclidean vector norm of a vector $\vx$ by $\| \vx \|$ and the Frobenius matrix norm of a matrix $\mx$ by $\|\mx\|_{\text{F}}$.  We denote the $i^{th}$ row of a matrix $\mx$ by $\vx_{i}\Tra$.  Let $\sigmaminplus(\mx)$ denote the smallest nonzero singular value of $\mx$. Let $\mi_{d}$ denote the $d \times d$ identity matrix.  Let $\mathcal{R}(\mx)$ and $\mathcal{N}(\mx)$ denote the range and null spaces of $\mx$, respectively.  We denote the inner product between two matrices $\ma$ and $\mb$ of conforming dimensions as $\langle \ma, \mb\rangle = \trace\{\ma\Tra\mb\}$.  We employ $\mathbb{N}^{+}$ to denote the positive integers and $[n]$ to denote the set of integers $\{1, 2, \dots, n\}$.  We abbreviate the singular value decomposition of a matrix as its SVD and the randomized Kaczmarz method as RK.

\section{Reduced-Rank Linear Discriminant Analysis}
\label{sec:rrlda}

Suppose we have a dataset containing $n$ labeled observations, each belonging to one of $g$ classes.  Following the notation in \cite{ye2007least}, let $\{(\vx_{i}, y_{i})\}$ for $i \in [n]$ denote the $i^{th}$ observation, where $\vx_{i} \in \mathbb{R}^{d}$ contains $d$ measurements and $y_{i} \in [g]$ denotes the corresponding known class label.  Let $\mx \in \mathbb{R}^{n \times d}$ denote the matrix comprised of the $\vx_{i}\Tra$ on the rows after column centering.  Let $n_{j}$ denote the number of observations in the $j^{th}$ class with $\sum_{j=1}^{g} n_{j}= n$ since each observation belongs to exactly one of $g$ classes.  Let $\vc^{(j)} \in \mathbb{R}^{d}$ denote the $j^{th}$ class centroid, or mean, and let $\vc \in \mathbb{R}^{d}$ denote the grand centroid of all observations.

Originally introduced in \cite{fisherlda}, the goal of RRLDA is to find a linear transformation matrix $\mg \in \mathbb{R}^{d \times \ell}$, with $\ell < d$, such that the transformed class centroids $\mg\Tra\vc^{(j)} \in \mathbb{R}^{\ell}$ are maximally separated with respect to variance.  Therefore, the optimal transformation $\mg$ simultaneously maximizes the variance between classes while minimizing the variance within classes.

To see how to obtain this optimal transformation $\mg$, we define the \emph{within-class}, \emph{between-class}, and \emph{total scatter matrices} \cite[Section 4.11]{duda2000pattern}
\begin{eqnarray*}
	\ms_{w} &=& \frac{1}{n} \sum_{j=1}^{g} \sum_{i:\, y_{i} = j} (\vx_{i} - \vc^{(j)})(\vx_{i} - \vc^{(j)})\Tra \in \mathbb{R}^{d \times d},\\
	\ms_{b} &=& \frac{1}{n} \sum_{j=1}^{g} n_{j}(\vc^{(j)} - \vc)(\vc^{(j)} - \vc)\Tra \in \mathbb{R}^{d \times d}, \text{ and} \\
	\ms_{t} &=& \frac{1}{n} \sum_{i=1}^{n} (\vx_{i} - \vc)(\vx_{i} - \vc)\Tra \in \mathbb{R}^{d \times d},
\end{eqnarray*}
where $\ms_{t} = \ms_{w} + \ms_{b}$.  There are multiple approaches to summarizing the amount of scatter in these matrices (\cite[Equation (106)]{duda2000pattern}, \cite{ye2007least}).  Note that $\trace(\ms_{w})$ and $\trace(\ms_{b})$ can be written as \cite{ye2007least}
\begin{eqnarray*}
	\trace(\ms_{w}) &=& \frac{1}{n} \sum_{j=1}^{g} \sum_{y_{i} = j} \|\vx_{i} - \vc^{(j)}\|^{2} \quad \text{ and} \quad
	\trace(\ms_{b}) = \frac{1}{n} \sum_{j=1}^{g} n_{j} \|\vc^{(j)} - \vc\|^{2}
\end{eqnarray*}
to highlight how the terms capture the variation among observations within a class and the differences between the classes, respectively.

In the transformed space, the observations are $\mg\Tra\vx_{i} \in \mathbb{R}^{\ell}$ for $i \in [n]$ and the corresponding scatter matrices are
\begin{eqnarray*}
	\ms_{w}^{G} &=& \mg\Tra\ms_{w}\mg \in \mathbb{R}^{\ell \times \ell},\\
	\ms_{b}^{G} &=& \mg\Tra\ms_{b}\mg \in \mathbb{R}^{\ell \times \ell}, \text{ and} \\
	\ms_{t}^{G} &=& \mg\Tra\ms_{t}\mg \in \mathbb{R}^{\ell \times \ell}.
\end{eqnarray*}
Since ${\ms_{t}^{G} = \ms_{w}^{G} + \ms_{b}^{G}}$, an optimal transformation $\mg^{\star}$ that maximizes differences between classes while minimizing variation within classes can be obtained by solving 
\begin{eqnarray}
	\mg^{\star} &=& \arg \max_{\mg} \left\{ \trace\left[ (\ms_{t}^{G})\Inv \ms_{b}^{G} \right] \right\}
	\label{eqn:ulda1}
\end{eqnarray}
if $(\ms_{t}^{G})\Inv$ exists.  This $\mg^{\star}$ corresponds to the eigenvectors of the nonzero eigenvalues of the following generalized eigenvalue problem \cite[Section 4.11, Equation (104)]{duda2000pattern}
\begin{eqnarray}\label{eqn:geneigen}
    \ms_{b} \vw &=& \lambda \ms_{t} \vw.
\end{eqnarray}
Since $\rank(\ms_{b}) \le g-1$, there are at most $g-1$ non-zero eigenvalues in \eqref{eqn:geneigen}.  Consequently, $\mg^{\star}$ has at most $g-1$ columns and therefore, $\ell$ is typically set to be $g-1$ \cite[Section 4.11]{duda2000pattern}.

\subsection{Related works on classical linear discriminant analysis}
\label{sec:relatedworks}

A variant of RRLDA is Gaussian-model LDA (GLDA) \cite[Section 4.3]{esl}.  Unlike RRLDA, which makes no distributional assumptions on the data, GLDA assumes that observations come from class-conditional Gaussian distributions.  A least squares formulation exists for binary-class GLDA and \cite{chi2025linear} describes and analyzes a scalable, randomized algorithm for it.  Random projections approaches to Gaussian-model LDA also exist \cite{durrant2010compressed, elkhalil2019asymptotic}.  

Computational algorithms for RRLDA include those based on a least squares formulation \cite{cai2008training, ye2007least} and deep learning approaches \cite{dorfer2015deep, diaz2019deep}.  We refer to the $n\ge d$ and $n < d$ cases as \emph{classical} and \emph{high-dimensional} RRLDA, respectively.  We next discuss the many works that aim to address difficulties in high-dimensional RRLDA.

\subsection{High-dimensional scenarios}
\label{sec:high-dimensional}

In many applications, the number of variables $d$ may be larger than the number of observations $n$.  For example, this is common in imaging datasets, where $d$ denotes the number of pixels and $n$ denotes the number of images.  In these cases, $\ms_{t}$ is singular.  

Many adaptations for high-dimensional RRLDA attempt to address the dilemma of singular scatter matrices, or singular covariance matrices in the case of GLDA.  However, these frequently involve additional assumptions or tuning parameters.  For example, in the Gaussian-model case, many works make simplifying assumptions on the covariance matrix, and are sometimes also described as regularization (towards non-singular or other structure) on the covariance matrices.  When no Gaussian modeling assumptions are made, the corresponding regularization can occur on scatter matrices.  Examples include diagonal \cite{friedman1989regularized, dudoit2002comparison, bickel2004some}, positive definite \cite{ye2006efficient, krzanowski1995discriminant, guo2007regularized}, and spiked \cite{sifaou2020high} covariance or scatter models.

Another common adaptation is to incorporate penalties involving tuning parameters.  For example, some works employ regularization on the discriminant vectors.  These include the $\ell_{1}$ and fused Lasso penalties \cite{witten2011penalized} and the $\ell_{1}$ penalty using a least squares framework for sparse RRLDA \cite{mai2012direct}.  Some apply the $\ell_{1}$ penalty to the ratio of the between-class to within-class scatter terms within the objective function \cite{wang2013fisher}.

Beyond modifications to adapt RRLDA for high-dimensional scenarios, another common approach is to simply perform dimension reduction on the column space prior to dimension reduction with RRLDA.  One example is to perform principal component analysis (PCA) prior to RRLDA \cite{belhumeur1997eigenfaces, swets2002using, zhao2008subspace}.  However, SVD-based approaches such as PCA require $\mathcal{O}(n^{2}d)$ flops in high-dimensional cases, and may be computationally prohibitive for very large data.  Since these approaches essentially preprocess the data prior to performing classical RRLDA, they are not a focus of this work. 

In lieu of simplifying assumptions or tuning parameters, one natural alternative is to employ the Moore-Penrose generalized inverse $\ms_{t}\Dag$, in lieu of $\ms_{t}\Inv$ \cite{raudys1998expected}.  One example is Uncorrelated LDA (ULDA) \cite[Algorithm 1]{ye2005characterization}, where we obtain
\begin{eqnarray}
	\mg^{U} &=& \arg \max_{\mg} \left\{ \trace\left( \ms_{b}^{G} (\ms_{t}^{G})\Dag \right) \right\}.
	\label{eqn:ulda}
\end{eqnarray}
Then $\mg^{U}$ corresponds to the eigenvectors of $\ms_{t}\Dag\ms_{b}$ that have non-zero eigenvalues \cite{ye2005characterization}.  

\subsection{Other related works on randomized high-dimensional RRLDA}

A randomized iterative algorithm for regularized RRLDA based on iterative randomized matrix multiplication, or sketching, also exists \cite{chowdhury2019randomized}.  This approach \cite[Algorithm 1]{chowdhury2019randomized} involves computing an SVD of $\mx\ms$, where $\ms\in \mathbb{R}^{d \times s}$ is an appropriately selected random matrix with $s \ll d$ and then performing a sequence of linear solves with it.  By contrast, Algorithm \ref{alg:rkRRLDA} in Section \ref{sec:rrlda-rk} requires no computations for selecting an appropriate $\ms$ (such as computing full or approximate leverage scores), no SVD computations, requires only one row of $\mx$ at a time, makes no structural assumptions on $\mx$, and is free of tuning parameters.

Another sketching-based approach performs dimension reduction prior to RRLDA through matrix multiplication with a random-valued lower dimensional matrix \cite{tu2014making}.  As such, it aligns with previously described approaches that perform dimension reduction as a pre-processing step prior to RRLDA.  By contrast, Algorithm \ref{alg:rkRRLDA} is a randomized algorithm for performing RRLDA that requires no additional computations for selecting $\ms$, no pre-processing procedures, and accesses only one row of $\mx$ at a time.

The method in \cite{jayaprakash2018randomized} also follows the dimension reduction prior to RRLDA approach described previously.  Rather than performing PCA, however, dimension reduction is achieved through generation of a matrix of random Fourier features in $\mathbb{R}^{n \times s}$, where $s \ll d$.  The goal of this approach is to mimic kernel discriminant analysis \cite{baudat2000generalized} without forming an $n \times n$ matrix.

\subsection{Computational Complexity of RRLDA}
\label{sec:compcostRRLDA}

In addition to the computational issues that arise in the high-dimensional case so that $\ms_{t}$ is singular, additional computational constraints may arise in the case that both $n$ and $d$ are very large.  Assuming no additional structure so that $\ms_{b}$ and $\ms_{t}$ are general dense matrices, the computational cost of obtaining the eigenvectors in \eqref{eqn:geneigen} requires $\mathcal{O}(d^{3})$ flops using naive approaches.

Accounting for the design of $\ms_{b}$ and $\ms_{t}$, however, can lead to computational savings.  Following the discussion in \cite[Section I, Part B]{cai2008training}, let $t=\min(n, d)$ and suppose that the total number of classes $g \ll d$.  Then computing $\mg^{\star}$ for RRLDA requires $\mathcal{O}(ndt) + \mathcal{O}(t^{3})$ flops \cite{cai2008training}.  In the high-dimensional case, $t=n$ so this cost becomes $\mathcal{O}(n^{2}d) + \mathcal{O}(n^{3})$ flops.  If $d \gg n$, then we have $\mathcal{O}(n^{2}d)$ flops.  This approach, however, requires computing an SVD of $\mx$.  If both $n$ and $d$ are very large, this can be computationally prohibitive.

\section{Least Squares Formulation of Reduced Rank Linear Discriminant Analysis}
\label{sec:lsLDA}

The solution to \eqref{eqn:ulda1}, and to \eqref{eqn:ulda} in the case that $\ms_{t}$ is singular, can be obtained via a least squares formulation \cite{duda2000pattern, fukunaga1993statistical, esl}.  Following \cite{ye2007least}, let $\my \in \mathbb{R}^{n \times g}$ be the recoded matrix of class labels with 
\begin{eqnarray}
	\my_{ij} &=& \begin{cases}
		\sqrt{\frac{n}{n_{j}}} - \sqrt{\frac{n_{j}}{n}} &\text{if} \quad y_{i} = j, \\
		-\sqrt{\frac{n_{j}}{n}} & \text{otherwise}
	\end{cases}.
	\label{eqn:makey}
\end{eqnarray}
Suppose that both $\mx$ and $\my$ have been column-centered.  Then we solve 
\begin{eqnarray}
	\mw^{\star} &=& \min_{\mw \in \mathbb{R}^{d \times g}}\, \frac{1}{2} \|\mx\mw -\my\|_{\text{F}}^{2}.
	\label{eqn:ls}
\end{eqnarray}
In the high-dimensional case, the system in \eqref{eqn:ls} is underconstrained and consequently, there are either no, or infinitely many, solutions.  A common approach is to seek a least (Frobenius) norm solution \cite[Section 5.6.2]{golub2013matrix} since it is computationally straightforward.  Therefore, $\mw^{\star}$ is the least squares solution ${\mwls = (\mx\Tra\mx)\Dag\mx\Tra\my}$ in the case that $n \ge d$ and least norm solution ${\mwln = (\mx\mx\Tra)\Dag\mx\my = \ms_{t}\Dag\mh_{b}}$ if $n < d$ (\cite[Theorem 4.1]{ye2007least}, \cite[(37)]{ye2007least}).

Notice that $\mw^{\star}$ has $g$ columns while $\mg^{U}$ has $g-1$ columns.  Nonetheless, $\mwln$ and $\mg^{U}$ from \eqref{eqn:ulda} are equivalent in the sense that they result in the same subspace up to padding with zeros and rotation by an orthogonal matrix \cite[Theorem 5.1]{ye2007least} under the assumption that 
\begin{eqnarray*}
	\rank(\ms_{t}) &=& \rank(\ms_{b}) + \rank(\ms_{w}).
\end{eqnarray*}
This rank assumption holds whenever $\mx$ has linearly independent observations \cite[Proposition 5.1]{ye2007least} so that $\rank(\ms_{t}) = n-1$.  This assumption appears to be reasonable for many high-dimensional datasets, holding for nine of ten high-dimensional datasets in \cite[Section 6]{ye2007least}.

\subsection{Computational Complexity of Least Squares RRLDA}
\label{sec:compcostLSLDA}

The computational cost of solving \eqref{eqn:ls} in the classic case where $n > d$ requires $\mathcal{O}(nd^{2})$ flops \cite[Section 5.5.6]{golub2013matrix}.  In the high-dimensional case, it requires $\mathcal{O}(n^{2}d)$ flops.  This in itself offers no improvement over the computational cost of RRLDA as described in Section \ref{sec:compcostRRLDA}.  However, casting RRLDA in the least squares framework enables use of the full spectrum of numerical tools available for solving large-scale linear systems.  Towards a scalable algorithm for \eqref{eqn:ls}, we next overview the randomized Kaczmarz method for solving large systems of linear equations.

\section{The Randomized Kaczmarz Method}
\label{sec:rk}

The Kaczmarz method was introduced in \cite{karczmarz1937angenaherte} and is a deterministic, iterative method for solving linear systems of equations one row at a time.  If the rows are selected at random with probabilities proportional to their row norms, then the randomized Kaczmarz method (RK) \cite{agaskar2014randomized, censor2009note, dai2013randomized, lin2015learning, liu2016accelerated, liu2014asynchronous, needell2010randomized, needell2016stochastic,  niu2020greedy, nutini2016convergence, strohmer2009randomized, zouzias2013rek, jiao2017preasymptotic} converges linearly in expectation to the least squares solution in the overdetermined, consistent case \cite{strohmer2009randomized}.  Convergence results for the overdetermined, noisy case \cite{needell2010randomized}, and the potentially inconsistent least squares case \cite{needell2016stochastic} also exist.

Typically in RK, the right-hand side in \eqref{eqn:ls} is a vector.  Beginning with an initial vector estimate $\vw_{0}$, the RK proceeds as follows.  In the $k^{th}$ RK iterate, it selects the $i^{th}$ row of $\mx$ and its corresponding right-hand side $\vy_{i}$ at random according to some sampling distribution.  Then it computes the $(k+1)^{th}$ update as
\begin{eqnarray*}\label{eqn:rk}
	\vw_{k+1} = \vw_{k} + \frac{y_{i} - \langle \vx_{i}, \vw_{k}\rangle}{\|\vx_{i}\|^{2}_{2}} \, \vx_{i}.
\end{eqnarray*}
Therefore, RK is an instance of weighted stochastic gradient descent where the objective function is the least squares objective \cite{needell2016stochastic}. 

To utilize the RK in the least squares formulation of RRLDA, we employ some modifications.  First, we transform the initial group membership vector $\vy$ into the membership membership matrix $\my$ as described in \eqref{eqn:makey}.  Second, since this results in matrix right-hand sides and solution subspaces in $\eqref{eqn:ls}$, we adapt the update above accordingly.

In the underdetermined and consistent case, RK \cite{strohmer2009randomized}, the randomized extended Kaczmarz \cite{zouzias2013rek}, and the randomized extended Gauss-Seidel \cite{ma2015regs} methods all converge to the least norm solution \cite[Table 1]{ma2015regs}.  Since we hypothesize that very high-dimensional datasets are likely to be consistent, we focus on the RK since it is  computationally the simplest and requires no additional extension procedures.  We explore consistency of real, very high-dimensional datasets in our numerical experiments in Section \ref{sec:experiments}.  In the overdetermined inconsistent case, the RK converges to a radius of the least squares solution \cite{needell2010randomized, needell2016stochastic}.  Our analysis in Section \ref{sec:rrlda-rk} show that this is also true of underdetermined inconsistent systems.  However, this can be rectified by tail averaging the RK iterates after a suitable burn-in period \cite{epperly2024randomized}.  Our numerical experiments in Section \ref{sec:experiments} demonstrate that the RRLDA subspace obtained with the RK performs as well, or very nearly as well, as state-of-the-art methods, on downstream classification tasks even without additional procedures.

\section{Large Scale High-Dimensional Reduced-Rank Linear Discriminant Analysis}
\label{sec:rrlda-rk}

We employ the least squares formulation in \eqref{eqn:ls} and the RK as described in Section \ref{sec:rk} to obtain a scalable algorithm for large-scale RRLDA.  Since the right-hand side in \eqref{eqn:ls} is a matrix, these RK iterates are likewise matrices.  Algorithm \ref{alg:rkRRLDA} depicts the procedures for RRLDA via the RK (\verb|RRLDA-RK|).  

\begin{algorithm}[ht] 
	\caption{Reduced-Rank LDA via the Randomized Kaczmarz Method (RRLDA-RK)}
	\begin{flushleft}
		\vspace{-0.2cm}
	{\bf Input:} Labeled data $(\vy \in \mathbb{R}^{n}$, $\mx \in \mathbb{R}^{n \times d}$)  where observations belong to one of $g$ classes, initial $\mw_{0} \in \mathbb{R}^{d \times g}$ belonging to the range of $\mx\Tra$, maximum iterations $K \in \mathbb{N}^{+}$, sampling probabilities $p_{i}$ for $1\le i \le n$ with $\sum_{i} p_{i} =1$. \\
	{\bf Output:} Reduced dimension subspace $\mw \in \mathbb{R}^{d \times g}$
	\end{flushleft}
	\begin{algorithmic}[1]
			\STATE Form column-centered $\my$ from $\vy$ according to \eqref{eqn:makey}
			\STATE $\mx \gets \mx(\mi - \frac{1}{n}\vone\vone\Tra) $ \COMMENT{Column-center $\mx$}
			\STATE{For $k=0, 1, 2, \dots, K-1$:}
			\STATE $\quad$Randomly sample $i$ from $[n]$ with probability $p_{i}$
			\STATE $\quad \mw_{k+1} \gets \mw_{k} +\, \frac{\vx_{i}}{\|\vx_{i}\|^{2}_{2}} \cdot (\vy_{i}\Tra - \vx_{i}\Tra\mw_{k}) $
			\STATE $\mw \gets \mw^{k+1}$
			\RETURN Reduced dimension subspace $\mw$ 
		\end{algorithmic}
	\label{alg:rkRRLDA}
\end{algorithm}

Proposition \ref{prop:W-inconsistent} presents finite iteration guarantees for the underdetermined (high-dimensional), potentially inconsistent, and matrix regression case (the right-hand side and coefficients are matrices rather than vectors) that is relevant to RRLDA.

\begin{proposition}\label{prop:W-inconsistent}
	Let $\mw^{\star}$ denote the least norm solution of \eqref{eqn:ls}, and suppose that we select the $i^{th}$ observation $\{\vy_{i}, \mx_{i}\}$ with probability $p_{i} = \frac{\|\vx_{i}\|^{2}}{\|\mx\|_{\text{F}}^{2}}$.  Let $\kappa(\mx) = \frac{\|\mx\|^{2}_{\text{F}}}{\{\sigma^{+}_{\min}(\mx)\}^{2}}$ be a scaled condition number, and let $\mr^{\star} = \my - \mx\mw^{\star}$ denote the residual matrix at $\mw^{\star}$.  Then the $k^{th}$ iterate $\mw_{k}$ of Algorithm \ref{alg:rkRRLDA} satisfies
	\begin{eqnarray*}
		\Exp\left[ \|\mw_{k} - \mw^{\star}\|^{2}_{\text{F}} \right]
		&\le& \left(1-\frac{1}{\kappa(\mx)}\right)^{k}\left\|\mw_{0} - \mw^{\star}\right\|^{2}_{\text{F}} + \beta \, \|\mr^{\star}\|^{2}_{\text{F}}.
	\end{eqnarray*}
    where $\beta = \frac{1}{\{\sigma^{+}_{\min}(\mx)\}^{2}}$.
\end{proposition}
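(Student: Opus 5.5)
The plan follows the Needell-type analysis of randomized Kaczmarz for inconsistent systems, adapted to matrix right-hand sides and to the underdetermined case. Throughout I work with the error $\mathbf{E}_{k} = \mw_{k} - \mw^{\star}$ and exploit the fact that every column of $\mathbf{E}_{k}$ stays in $\mathcal{R}(\mx\Tra)$.

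\textbf{Step 1: the error stays in the range of $\mx\Tra$.} Since $\mw_{0}\in\mathcal{R}(\mx\Tra)$ and each update adds a multiple of $\vx_{i}$, every $\mw_{k}$ has columns in $\mathcal{R}(\mx\Tra)$. The least norm solution $\mw^{\star} = \mx\Dag\my$ also lies in $\mathcal{R}(\mx\Tra)$, so $\mathbf{E}_{k}$ does too. This invariance gives $\|\mx\mathbf{E}_{k}\|_{\text{F}}^{2} \ge \{\sigmaminplus(\mx)\}^{2}\|\mathbf{E}_{k}\|_{\text{F}}^{2}$, which is the substitute for full column rank in the underdetermined setting.

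\textbf{Step 2: one-step decomposition.} Write $\mPxi = \vx_{i}\vx_{i}\Tra/\|\vx_{i}\|^{2}$. Using $\vy_{i}\Tra = \vx_{i}\Tra\mw^{\star} + \vr_{i}^{\star\Tra}$, where $\vr_{i}^{\star\Tra}$ is the $i$th row of $\mr^{\star}$, the update becomes
\begin{equation*}
\mathbf{E}_{k+1} = (\mi_{d}-\mPxi)\mathbf{E}_{k} + \frac{\vx_{i}\vr_{i}^{\star\Tra}}{\|\vx_{i}\|^{2}}.
\end{equation*}
The two terms are Frobenius-orthogonal, because the first lies in $\vx_{i}^{\perp}$ column-wise and the second lies along $\vx_{i}$. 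Hence
\begin{equation*}
\|\mathbf{E}_{k+1}\|_{\text{F}}^{2} = \|\mathbf{E}_{k}\|_{\text{F}}^{2} - \frac{\|\vx_{i}\Tra\mathbf{E}_{k}\|^{2}}{\|\vx_{i}\|^{2}} + \frac{\|\vr_{i}^{\star}\|^{2}}{\|\vx_{i}\|^{2}}.
\end{equation*}
This orthogonality is exact, so no cross term or Young's inequality is needed.

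\textbf{Step 3: conditional expectation.} Take the expectation over $i$ with $p_{i} = \|\vx_{i}\|^{2}/\|\mx\|_{\text{F}}^{2}$:
\begin{equation*}
\Exp\left[\|\mathbf{E}_{k+1}\|_{\text{F}}^{2}\given \mathbf{E}_{k}\right] = \|\mathbf{E}_{k}\|_{\text{F}}^{2} - \frac{\|\mx\mathbf{E}_{k}\|_{\text{F}}^{2}}{\|\mx\|_{\text{F}}^{2}} + \frac{\|\mr^{\star}\|_{\text{F}}^{2}}{\|\mx\|_{\text{F}}^{2}}.
\end{equation*}
Step 1 then gives
\begin{equation*}
\Exp\left[\|\mathbf{E}_{k+1}\|_{\text{F}}^{2}\given \mathbf{E}_{k}\right] \le \left(1-\frac{1}{\kappa(\mx)}\right)\|\mathbf{E}_{k}\|_{\text{F}}^{2} + \frac{\|\mr^{\star}\|_{\text{F}}^{2}}{\|\mx\|_{\text{F}}^{2}}.
\end{equation*}

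\textbf{Step 4: unroll the recursion.} Taking total expectations and iterating gives the contraction term plus
\begin{equation*}
\frac{\|\mr^{\star}\|_{\text{F}}^{2}}{\|\mx\|_{\text{F}}^{2}}\sum_{j=0}^{k-1}\left(1-\frac{1}{\kappa(\mx)}\right)^{j}.
\end{equation*}
The geometric series is at most $\kappa(\mx)$, so this term is at most $\|\mr^{\star}\|_{\text{F}}^{2}/\{\sigmaminplus(\mx)\}^{2} = \beta\|\mr^{\star}\|_{\text{F}}^{2}$.

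\textbf{Main obstacle.} The delicate step is Step 1: one must check that the lower bound on $\|\mx\mathbf{E}_{k}\|_{\text{F}}$ really applies. This hinges on $\mw^{\star}$ being the least norm solution, hence in $\mathcal{R}(\mx\Tra)$, together with the range-preserving initialization. Everything else is the standard argument applied column by column.
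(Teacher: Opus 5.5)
Your proposal is correct and follows the paper's proof essentially step for step. It uses the same decomposition $\mw_{k+1}-\mw^{\star} = (\mi-\mPxi)(\mw_{k}-\mw^{\star}) + \vx_{i}\vr_{i}\Tra/\|\vx_{i}\|^{2}$, the same exact Pythagorean identity, the same conditional expectation under row-norm sampling, and the same unrolled recursion. You go beyond the paper in two places: Step 1 justifies $\|\mx(\mw_{k}-\mw^{\star})\|_{\text{F}}^{2} \ge \{\sigmaminplus(\mx)\}^{2}\|\mw_{k}-\mw^{\star}\|_{\text{F}}^{2}$ via the range invariance of the iterates and of $\mw^{\star}=\mx\Dag\my$, which the paper simply invokes as a fact, and you make the geometric-series bound by $\kappa(\mx)$ explicit.
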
	

\begin{proof}
   Since the system may be inconsistent, we have $\vy_{i}\Tra = \vx_{i}\Tra \mw^{\star} + \vr_{i}\Tra$ for each row $i$ with $i \in [n]$, where $\vr_{i}\Tra$ is the $i^{th}$ row of $\mr^{\star}$.  We employ this equation involving $\vy_{i}\Tra$, combine terms, and factor out a negative sign to obtain the RK residual at the $k^{th}$ iterate
   \begin{eqnarray*}
       \vy_{i}\Tra - \vx_{i}\Tra \mw_{k} &=& -\vx_{i}\Tra(\mw_{k} - \mw^{\star}) + \vr_{i}\Tra.
   \end{eqnarray*}
   Let $\mPxi = \frac{\vx_{i}\vx_{i}\Tra}{\|\vx_{i}\|^{2}_{2}}$ denote the orthogonal projection matrix onto $\span\{\vx_{i}\}$.  We employ the RRLDA-RK update from Algorithm \ref{alg:rkRRLDA} (line 5), insert the definition of $\vy_{i}$ with respect to the residual at $\mw^{\star}$, factor out a negative sign, and combine terms to obtain
   \begin{eqnarray*}
       \mw_{k+1} - \mw^{\star} &=& (\mi - \mPxi)(\mw_{k} - \mw^{\star}) + \frac{\vx_{i}}{\|\vx_{i}\|^{2}}\vr_{i}\Tra.
   \end{eqnarray*}
   Taking the squared Frobenius norm on both sides, expanding the squared norm on the right, and combining terms gives
   \begin{eqnarray*}
       \|\mw_{k+1} - \mw^{\star}\|^{2}_{\text{F}} 
       &=& \|(\mi - \mPxi)(\mw_{k} - \mw^{\star})\|^{2}_{\text{F}} +  \frac{\|\vr_{i}\|^{2}}{\|\vx_{i}\|^{2}}\\
       &=& \|\mw_{k} - \mw^{\star}\|^{2}_{\text{F}} - \left\|\frac{\vx_{i}\Tra(\mw_{k} - \mw^{\star})}{\|\vx_{i}\|}\right\|^{2}_{\text{F}} + \frac{\|\vr_{i}\|^{2}}{\|\vx_{i}\|^{2}}.
   \end{eqnarray*}
   The first equality follows from the fact that $\vx_{i} \in \mathcal{N}(\mi - \mPxi)$.  The second equality follows from definition of the $\mPxi$ and the fact that it is an orthogonal projector.

   Taking the expectation on both sides with respect to the sampling randomness at the $(k+1)^{th}$ iterate after conditioning on the $k^{th}$ iterate gives
   \begin{eqnarray*}
       \Exp\left[ \|\mw_{k+1} - \mw^{\star}\|^{2}_{\text{F}} \given \mw_{k}\right]
       &=& \|\mw_{k} - \mw^{\star}\|^{2}_{\text{F}} - \frac{1}{\|\mx\|^{2}_{\text{F}}}\left\|\mx(\mw_{k} - \mw^{\star})\right\|^{2}_{\text{F}} + \frac{\|\mr\|^{2}_{\text{F}}}{\|\mx\|^{2}_{\text{F}}}. 
   \end{eqnarray*}
    We employ the fact that $\|\mx(\mw_{k} - \mw^{\star})\|^{2}_{\text{F}} \ge \{\sigmaminplus(\mx)\}^{2}\|(\mw_{k}-\mw^{\star})\|^{2}_{\text{F}}$ and combine terms to obtain
    \begin{eqnarray*}
       \Exp\left[ \|\mw_{k+1} - \mw^{\star}\|^{2}_{\text{F}} \given \mw_{k}\right]
       &\le& \left(1 - \frac{\{\sigma^{+}_{\min}(\mx)\}^{2}}{\|\mx\|^{2}_{\text{F}}}\right)\left\|\mw_{k} - \mw^{\star}\right\|^{2}_{\text{F}} + \frac{\|\mr^{\star}\|^{2}_{\text{F}}}{\|\mx\|^{2}_{\text{F}}}.
   \end{eqnarray*}
   Applying the Law of Total Expectations gives
   \begin{eqnarray*}
       \Exp\left[ \|\mw_{k+1} - \mw^{\star}\|^{2}_{\text{F}} \right]
       &\le& \left(1 - \frac{\{\sigma^{+}_{\min}(\mx)\}^{2}}{\|\mx\|^{2}_{\text{F}}}\right)\Exp\left[ \left\|\mw_{k} - \mw^{\star}\right\|^{2}_{\text{F}}\right] + \frac{\|\mr^{\star}\|^{2}_{\text{F}}}{\|\mx\|^{2}_{\text{F}}}.
   \end{eqnarray*}
   Let $\kappa(\mx) = \frac{\|\mx\|^{2}_{\text{F}}}{\{\sigma^{+}_{\min}(\mx)\}^{2}}$ and observe that $1 - \frac{1}{\kappa(\mx)} < 1$.    
   Recursively applying the above bound over the previous $k$ iterations gives us
   \begin{eqnarray*}
       \Exp\left[ \|\mw_{k} - \mw^{\star}\|^{2}_{\text{F}} \right]
       &\le& \left(1-\frac{1}{\kappa(\mx)}\right)^{k}\left\|\mw_{0} - \mw^{\star}\right\|^{2}_{\text{F}} + \underbrace{\frac{1}{\{\sigma^{+}_{\min}(\mx)\}^{2}}}_{\beta} \|\mr^{\star}\|^{2}_{\text{F}}.
   \end{eqnarray*}
\end{proof}

Notice that in the consistent case, $\|\mr^{\star}\|_{F} = 0$ so we recover the exponential convergence in \cite[Theorem 2]{strohmer2009randomized} with the difference that the solution is a matrix and $\kappa(\mx)$ involves $\sigmaminplus(\mx)$ since $\mx$ is high-dimensional.  In the inconsistent case, we compare the second term on the right in Proposition \ref{prop:W-inconsistent} to the square of the second term on the right in \cite[Theorem 2.1]{needell2010randomized}, which deals with the consistent case that is observed with noise.  Using the current notation, that term in \cite[Theorem 2.1]{needell2010randomized} is $\frac{1}{\sigma_{\min}(\mx)^{2}}\|\mx\|^{2}_{\text{F}}\max_{i}\frac{|r_{i}|^{2}}{\|\vx_{i}\|^{2}}$.  The corresponding term in Proposition \ref{prop:W-inconsistent} is $\frac{1}{\{\sigmaminplus(\mx)\}^{2}}\|\mr\|^{2}_{\text{F}}$.  
Recall that \cite[Theorem 2.1]{needell2010randomized} deals with a vector right-hand side while Proposition \ref{prop:W-inconsistent} has a matrix right-hand side.  Consequently, Proposition \ref{prop:W-inconsistent} involves a matrix of residuals $\mr$.  We also compare with the second term on the right in \cite[Corollary 5.1]{needell2016stochastic}, which allows for fixing a step-size $c<1$ in the RK updates.  Using the current notation, if $c=\frac{1}{2}$, that term in \cite[Corollary 5.1]{needell2016stochastic} is $\frac{1}{\sigmaminplus(\mx\Tra\mx)}\cdot \frac{\sigma^{2}}{n\, \inf_{i}\|\vx_{i}\|^{2}}$, where $\sigma^{2} = n\sum_{i}\|\vx_{i}\|^{2}|\langle \vx_{i}, \vw^{\star}\rangle - y_{i}\|^{2}$.  Again, we highlight that Proposition \ref{prop:W-inconsistent} involves a matrix of residuals since \verb|RRLDA-RK| has a matrix right-hand side.

The second term on the right of Proposition \ref{prop:W-inconsistent} makes clear how inconsistency affects convergence.  If the system is consistent, the second term on the right-hand side vanishes.  If the system is inconsistent and $\kappa(\mx) > 1$, then as $k \rightarrow \infty$, $\left(1-\frac{1}{\kappa(\mx)}\right)^{k} \rightarrow 0$ so one cannot do better than getting within a radius of $\beta \|\mr\|^{2}_{\text{F}}$ of $\mw^{\star}$ with Algorithm \ref{alg:rkRRLDA} and row-based weights $p_{i} = \frac{\|\vx_{i}\|^{2}}{\|\mx\|^{2}_{\text{F}}}$.  Additionally, in the case that $n$ and $d$ are both very large and the system in \eqref{eqn:ls} is inconsistent, $\|\mr\|^{2}_{\text{F}}$ may be large.  Similarly, if $\sigma^{+}_{\min}(\mx)$ is very small, then the radius coefficient $\beta$ can likewise be large.

Without loss of generality, we consider the worst case scenario, where the system in \eqref{eqn:ls} may be inconsistent.  In that case, we cannot expect to do better than achieving a radius $\beta \,\|\mr\|^{2}_{\text{F}}$ of $\mw^{\star}$.  Let $\epsilon>0$ be a small tolerance.  We next consider the number of iterations $k$ in Algorithm \ref{alg:rkRRLDA} required to obtain an expected error in Proposition \ref{prop:W-inconsistent} that is less than $\beta \, \|\mr\|^{2}_{\text{F}} + \epsilon$ with row-based weights.
 
\begin{corollary}
\label{cor:numberk}
    Let $\kappa(\mx) = \frac{\|\mx\|^{2}_{\text{F}}}{\{\sigma^{+}_{\min}(\mx)\}^{2}}$ be a scaled condition number, and let $\mr = \my - \mx\mw^{\star}$ denote the residual matrix at the least norm solution $\mw^{\star}$ of \eqref{eqn:ls}.  Define $\varepsilon_{0} = \|\mw_{0} - \mw^{\star}\|^{2}_{\text{F}}$ to be the initial error.  Given a tolerance $\epsilon > 0$ and row-based weights $p_{i} = \frac{\|\vx_{i}\|^{2}}{\|\mx\|^{2}_{\text{F}}}$ for $1 \le i \le n$,
    \begin{eqnarray*}
        k \ge \frac{\log \epsilon - \log \varepsilon_{0}}{\log(1 - \frac{1}{\kappa(\mx)})}
    \end{eqnarray*}
    iterations in Algorithm \ref{alg:rkRRLDA} are sufficient to ensure that $\Exp[\|\mw_{k} - \mw_{\star}\|^{2}_{\text{F}}]  < \beta \|\mr\|^{2}_{\text{F}} + \epsilon$.
\end{corollary}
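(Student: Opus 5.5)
The plan is to apply Proposition \ref{prop:W-inconsistent} directly and then solve for $k$. The proposition, used with the row-based weights $p_{i} = \|\vx_{i}\|^{2}/\|\mx\|^{2}_{\text{F}}$, gives
\begin{eqnarray*}
\Exp\left[\|\mw_{k} - \mw^{\star}\|^{2}_{\text{F}}\right] &\le& \left(1-\frac{1}{\kappa(\mx)}\right)^{k}\varepsilon_{0} + \beta\,\|\mr\|^{2}_{\text{F}}.
\end{eqnarray*}
It therefore suffices to show that the first term is less than $\epsilon$ whenever $k$ satisfies the stated lower bound.

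\textbf{Sign facts.} Since $\{\sigmaminplus(\mx)\}^{2} \le \|\mx\|^{2}_{\text{F}}$, we have $\kappa(\mx)\ge 1$, so $0 \le 1 - 1/\kappa(\mx) < 1$. In the nondegenerate case $\kappa(\mx)>1$ (for example whenever $\rank(\mx)\ge 2$), this gives $\log(1-1/\kappa(\mx))<0$.

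\textbf{Degenerate cases.} If $\kappa(\mx)=1$, the contraction factor is zero and the bound holds for every $k\ge 1$. If $\varepsilon_{0}=0$, or more generally $\varepsilon_{0}<\epsilon$, the right-hand side of the stated inequality is nonpositive or $-\infty$, and the claim is immediate because $(1-1/\kappa(\mx))^{k}\varepsilon_{0}\le\varepsilon_{0}<\epsilon$.

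\textbf{Main case.} Assume $\kappa(\mx)>1$ and $\varepsilon_{0}>0$. Since $\log$ is increasing, we want $(1-1/\kappa(\mx))^{k}\varepsilon_{0} \le \epsilon$, which is equivalent to
\begin{eqnarray*}
k\log\left(1-\frac{1}{\kappa(\mx)}\right) &\le& \log\epsilon - \log\varepsilon_{0}.
\end{eqnarray*}
Dividing by the negative number $\log(1-1/\kappa(\mx))$ reverses the inequality and yields exactly the hypothesis on $k$. Adding $\beta\|\mr\|^{2}_{\text{F}}$ back then gives the expected error bound.

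\textbf{Main obstacle.} The only real difficulty is that the corollary states a strict inequality, while the hypothesis $k\ge\cdots$ gives only $\le$. When equality holds in the $k$-condition, the stated bound gives exactly $\beta\|\mr\|^{2}_{\text{F}}+\epsilon$, so strictness needs an extra argument.

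I would first try to get strictness from the proposition's proof. There, $\sum_{j<k}(1-1/\kappa)^{j}\,\|\mr\|^{2}_{\text{F}}/\|\mx\|^{2}_{\text{F}}$ is bounded by $\beta\|\mr\|^{2}_{\text{F}}$, and this inequality is strict whenever $\|\mr\|_{\text{F}}>0$ and $\kappa(\mx)>1$, because the infinite geometric tail is dropped. In the consistent case this trick does not help, since the residual term vanishes. There I would instead either allow the non-strict bound (reading the tolerance as $\le\epsilon$), or treat the boundary case $k=(\log\epsilon-\log\varepsilon_{0})/\log(1-1/\kappa)$ separately by noting that one more iteration makes the first term strictly smaller than $\epsilon$.
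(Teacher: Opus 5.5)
Your argument is correct and is the derivation the paper leaves implicit (it gives no separate proof): invoke Proposition~\ref{prop:W-inconsistent}, require $(1-1/\kappa(\mx))^{k}\varepsilon_{0}\le\epsilon$, take logarithms, and reverse the inequality when dividing by $\log(1-1/\kappa(\mx))<0$. Your concern about strictness is well founded and goes beyond what the paper addresses. In the consistent case the strict inequality can genuinely fail at the boundary: if all nonzero singular values of $\mx$ coincide, the contraction in Proposition~\ref{prop:W-inconsistent} is exact, so when the threshold is an integer $k$ one gets $\Exp[\|\mw_{k}-\mw^{\star}\|^{2}_{\text{F}}]=\epsilon$. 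Reading the conclusion as $\le$, or taking one extra iteration, is therefore necessary rather than merely convenient.
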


Corollary \ref{cor:numberk} gives the number of iterations $k$ required to attain a very small error $\epsilon$ of the possible floor.  Recall that in the consistent case, $\|\mr\|^{2}_{F} =0$ so the right-hand side is simply $\epsilon$.  In practice, however, the RK method exhibits much faster convergence towards the least norm solution in its initial iterates \cite{jiao2017preasymptotic, jin2018regularizing, steinerberger2021randomized, steinerberger2020regularization}.  This phenomenon is referred to as \emph{implicit regularization} since the algorithm makes greater progress along certain directions even without explicitly incorporating regularization penalties.  In the next section, we describe how Algorithm \ref{alg:rkRRLDA} imparts implicit regularization.

\subsection{Implicit regularization via the randomized Kaczmarz method}

The regularization effect of the RK method for solving invertible and overdetermined linear systems with a unique solution and a single right-hand side is explored in \cite[Theorem 1]{steinerberger2020regularization}.  Using the current notation, the left-hand side of \cite[Theorem 1]{steinerberger2020regularization} is $\Exp[\|\mx\mw_{k+1}-\my\|^{2}_{F} \given \mw_{k}]$.  Rather than looking at the expected squared norm of the $k^{th}$ RK residual, Proposition \ref{prop:implicitreg} below directly shows how the implicit regularization occurs in expectation on the RK iterates.

\begin{proposition}\label{prop:implicitreg}
    Let $\mx \in \mathbb{R}^{n \times d}$ and $\my \in \mathbb{R}^{n \times g}$ and consider solving $\min_{\mw \in \mathbb{R}^{d \times g}} \|\mx\mw - \my\|^{2}_{\text{F}}$ in \eqref{eqn:ls} via the RK as described in Algorithm \ref{alg:rkRRLDA} with $p_{i} = \frac{\|\vx_{i}\|^{2}}{\|\mx\|^{2}_{\text{F}}}$.  Let $\mw^{\star} = \mx\Dag\my$ denote the least norm solution to \eqref{eqn:ls} and let $\Exp[ \, \cdot \,]$ abbreviate $\Exp_{i}[ \, \cdot \,| \,\mw_{k}]$, the expectation with respect to the sampling randomness at the $(k+1)^{th}$ iterate after conditioning on the $k^{th}$ iterate.  Then 
    \begin{eqnarray*}
        \Exp[\mw_{k+1} - \mw_{k}] &=& -\frac{1}{\|\mx\|^{2}_{F}}\sum_{j=1}^{r}\sigma_{j}^{2}\vv_{j}\vgamma_{j}\Tra,
    \end{eqnarray*}
    where the $\vv_{j}$ for $j \in [r]$ form a basis for $\mathcal{R}(\mx\Tra)$ and the $\vgamma_{j} \in \mathbb{R}^{g}$ are the corresponding coefficient vectors for the basis expansion of the matrix left-hand side.
\end{proposition}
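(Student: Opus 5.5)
Compute the one-step conditional expectation directly from the update in line 5 of Algorithm \ref{alg:rkRRLDA} and then expand it in the SVD of $\mx$. With $p_{i} = \|\vx_{i}\|^{2}/\|\mx\|^{2}_{\text{F}}$, the normalization $\|\vx_{i}\|^{2}$ in the update cancels against the sampling weight, so
\begin{eqnarray*}
	\Exp[\mw_{k+1} - \mw_{k}] &=& \sum_{i=1}^{n} p_{i}\,\frac{\vx_{i}}{\|\vx_{i}\|^{2}}\left(\vy_{i}\Tra - \vx_{i}\Tra\mw_{k}\right) \;=\; \frac{1}{\|\mx\|^{2}_{\text{F}}}\,\mx\Tra(\my - \mx\mw_{k}).
\end{eqnarray*}
This is just the (scaled) negative gradient of the least squares objective, consistent with the stochastic gradient interpretation recalled in Section \ref{sec:rk}.

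Next, rewrite the right-hand side in terms of $\mw_{k} - \mw^{\star}$. The least norm solution $\mw^{\star} = \mx\Dag\my$ satisfies the normal equations $\mx\Tra\mx\mw^{\star} = \mx\Tra\my$, even in the inconsistent case. Hence $\mx\Tra\my = \mx\Tra\mx\mw^{\star}$, and
\begin{eqnarray*}
	\Exp[\mw_{k+1} - \mw_{k}] &=& -\frac{1}{\|\mx\|^{2}_{\text{F}}}\,\mx\Tra\mx(\mw_{k} - \mw^{\star}).
\end{eqnarray*}
Take the compact SVD $\mx = \sum_{j=1}^{r}\sigma_{j}\vu_{j}\vv_{j}\Tra$ with $r = \rank(\mx)$. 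The right singular vectors $\vv_{1},\dots,\vv_{r}$ form an orthonormal basis of $\mathcal{R}(\mx\Tra)$, and $\mx\Tra\mx = \sum_{j}\sigma_{j}^{2}\vv_{j}\vv_{j}\Tra$.

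The remaining step is to expand $\mw_{k} - \mw^{\star}$ in this basis. By the input requirement of Algorithm \ref{alg:rkRRLDA}, $\mw_{0}$ has columns in $\mathcal{R}(\mx\Tra)$, and each update adds a multiple of $\vx_{i}$ times a row vector. By induction, every column of $\mw_{k}$ therefore lies in $\mathcal{R}(\mx\Tra)$. So do the columns of $\mw^{\star} = \mx\Dag\my$, since $\mathcal{R}(\mx\Dag) = \mathcal{R}(\mx\Tra)$. Consequently $\mw_{k} - \mw^{\star} = \sum_{j=1}^{r}\vv_{j}\vgamma_{j}\Tra$ with $\vgamma_{j} = (\mw_{k}-\mw^{\star})\Tra\vv_{j} \in \mathbb{R}^{g}$. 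Substituting and using orthonormality $\vv_{j}\Tra\vv_{l} = \delta_{jl}$ gives
\[
\mx\Tra\mx(\mw_{k}-\mw^{\star}) = \sum_{j}\sigma_{j}^{2}\vv_{j}\vgamma_{j}\Tra,
\]
which yields the claimed formula.

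The main point needing care is the last step: identifying what ``the matrix left-hand side'' being expanded is (namely $\mw_{k}-\mw^{\star}$) and justifying that it lies entirely in $\mathcal{R}(\mx\Tra)$. The latter relies on the range initialization and the invariance of the iterates. Without it, one would get components in $\mathcal{N}(\mx)$. These are annihilated by $\mx\Tra\mx$ anyway, so the formula still holds with $\vgamma_{j}$ defined as the projections onto $\vv_{j}$; I would remark on this to make the statement robust. The interpretation then follows: the component along $\vv_{j}$ is reduced at rate proportional to $\sigma_{j}^{2}$, so large-singular-value directions converge first, which is the implicit regularization.
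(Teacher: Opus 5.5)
Your proposal is correct and follows the paper's proof essentially step for step: you take the conditional expectation under row-norm sampling to obtain $-\frac{1}{\|\mx\|^{2}_{\text{F}}}\mx\Tra(\mx\mw_{k}-\my)$, substitute the normal equations for $\mw^{\star}$, and expand $\mw_{k}-\mw^{\star}$ in the right singular vectors of $\mx$. The only difference is minor. The paper keeps the $\mathcal{N}(\mx)$ component of $\mw_{k}-\mw^{\star}$ and lets $\mx\Tra\mx$ annihilate it, whereas you first show that this component vanishes by invariance of the iterates under the range initialization; your fallback remark is exactly the paper's argument.
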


\begin{proof}
   Beginning with the RK update in Line 5 of Algorithm \ref{alg:rkRRLDA}, we have
   \begin{eqnarray*}
       \mw_{k+1} - \mw_{k} &=& \frac{\vx_{i}}{\|\vx_{i}\|^{2}_{2}}(\vy_{i}\Tra - \vx_{i}\Tra\mw_{k}).
   \end{eqnarray*}
   Taking expectations with respect to the sampling randomness on both sides after conditioning on the $k^{th}$ iterate gives
   \begin{eqnarray*}
       \Exp[\mw_{k+1} - \mw_{k}] &=& -\frac{1}{\|\mx\|^{2}_{F}}\mx\Tra(\mx\mw_{k} - \my).
   \end{eqnarray*}
   Since $\mw^{\star}$ satisfies the normal equations, $\mx\Tra\mx\mw^{\star} = \mx\Tra \my$ so that
   \begin{eqnarray}\label{eqn:implicitreg_1}
       \Exp[\mw_{k+1} - \mw_{k}] &=& -\frac{1}{\|\mx\|^{2}_{F}}\mx\Tra\mx(\mw_{k} - \mw^{\star}).
   \end{eqnarray}
   Let ${\mx = \mU\md\mv\Tra}$ be the economic singular value decomposition of $\mx$ with $\rank(\mx) = r \le \min\{n, d\}$ so that $\mv \in \mathbb{R}^{d \times r}$.  Then $\mv\mv\Tra$ is the orthogonal projector onto $\mathcal{R}(\mx\Tra)$.  Let $\widetilde{\mv} \in \mathbb{R}^{d \times (d-r)}$ be the orthogonal complement of $\mv$ so that $\widetilde{\mv}\widetilde{\mv}\Tra$ is the orthogonal projector onto $\mathcal{N}(\mx)$ and $\mv\mv\Tra + \widetilde{\mv}\widetilde{\mv}\Tra = \mi_{d}$.  Therefore, we can write $\mw_{k} - \mw^{\star}$ with a basis expansion as
    \begin{eqnarray*}
        \mw_{k} - \mw^{\star} &=& \sum_{a=1}^{r} \vv_{a} \vgamma_{a}\Tra + \sum_{b=1}^{d-r} {\tilde{\vv}}_{b}\vtheta_{b}\Tra,
    \end{eqnarray*}
    where $\vv_{a}$ for $a \in [r]$ are the columns of $\mv$, ${\tilde{\vv}}_{b}$ for $b \in [d-r]$ are the columns of $\widetilde{\mv}$, and the $\vgamma_{a}$ and $\vtheta_{b}$ are the corresponding coefficient vectors in $\mathbb{R}^{g}$ that express this basis expansion for a matrix.  Therefore, left multiplying $\mw_{k} - \mw^{\star}$ by $\mx\Tra\mx$ gives
    \begin{eqnarray*}
        \mx\Tra\mx(\mw_{k} - \mw^{\star}) &=& \mx\Tra\mx\sum_{a=1}^{r} \vv_{a} \vgamma_{a}\Tra \\
        &=& \sum_{j=1}^{r}\sigma_{j}^{2}\vv_{j}\vgamma_{j}\Tra,
    \end{eqnarray*}
    where the $\sigma_{j}$ for $1\le j\le r$ are the ordered non-zero singular values of $\mx$.  The first equality follows from the fact that $\widetilde{\mv}$ is a basis for $\mathcal{N}(\mx)$.  The second equality follows from the fact that the $\vv_{j}$ are orthogonal.  Inserting the last equality in \eqref{eqn:implicitreg_1} gives
    the result.
\end{proof}

Now we see how Algorithm \ref{alg:rkRRLDA} imparts implicit regularization.  Proposition \ref{prop:implicitreg} says that at each iterate, the RK iterates $\mw_{k}$ stay in the row space of $\mx$.  Moreover, progress from one iterate to the next is most heavily weighted (by $\sigma_{j}^{2}$) towards subtracting directions that correspond to the largest singular vectors of $\mathcal{R}(\mx\Tra)$ remaining in $\mw_{k} - \mw^{\star}$. Employing the same economic SVD of $\mx$ from the proof of Proposition \ref{prop:implicitreg} above, recall that the least norm solution
\begin{eqnarray*}
    \mw^{\star} &=& \mx\Dag\my \quad=\quad \sum_{j=1}^{r}\frac{1}{\sigma_{j}}\vv_{j}\vu_{j}\Tra\my,
\end{eqnarray*}
where the $\sigma_{j}$ are the ordered singular values of $\mx$.  Therefore, the directions corresponding to singular vectors with the largest singular values of $\mx$ are down-weighted in the least norm solution.  Indeed, Proposition \ref{prop:implicitreg} shows that at each iterate, the RK update removes the directions corresponding to the singular vectors of $\mx$ remaining in $\mw_{k} - \mw^{\star}$ with largest singular values.  Therefore, Algorithm \ref{alg:rkRRLDA} steers the updates towards the least norm solution even without explicit regularization.  

While Proposition \ref{prop:implicitreg} holds generally for linear systems with matrix right-hand sides, it certainly also holds for systems with a single right-hand side.  In that case, $g=1$ and the $\vgamma_{j} \in \mathbb{R}^{g}$ are simply scalars.

\subsection{Computational Complexity of RRLDA-RK}

Assuming that $\mx$ is a general dense matrix and that $g \ll d$, each iteration of Algorithm \ref{alg:rkRRLDA} requires $\mathcal{O}(d)$ flops.  If there are a maximum of $K$ iterations, then the computational cost of running Algorithm \ref{alg:rkRRLDA} is $\mathcal{O}(Kd)$ flops.  While the $K$ required to obtain a very small error tolerance of the floor as detailed in Corollary \ref{cor:numberk} can be quite large, we observe that $K$ can be much smaller in practice when the goal is a downstream machine learning tasks.  This was observed in \cite{chi2025linear} and also exhibited in the numerical experiments below.

\section{Numerical Experiments}
\label{sec:experiments}

In this section, we present numerical experiments on real datasets to highlight the efficacy of the \verb|RRLDA-RK| subspace from Algorithm \ref{alg:rkRRLDA} compared with previously described alternatives for downstream classification tasks.  We describe the experimental setup, the real datasets, and additional details relevant to each dataset below.

\subsection{Experimental Setup}

We compare the methods previously discussed for computing a reduced dimension subspace prior to downstream classification.  These include \verb|LSLDA| from \cite{ye2007least}, \verb|ULDA| from \cite[Algorithm 1]{ye2005characterization}, \verb|LN Sol| from the least norm solution to \eqref{eqn:ls} obtained via LSQR \cite{lsqr}, and \verb|RRLDA-RK| from Algorithm \ref{alg:rkRRLDA}.  Since our data are large and high-dimensional so that the system in \eqref{eqn:ls} is underdetermined, we employ an operator-based implementation of the LSQR algorithm with the \verb|scipy.sparse.linalg.LinearOperator| class to define $\mx$ as a linear operator through functions that evaluate matrix–vector products with $\mx$ and $\mx\Tra$.  Since the data are high-dimensional, we do not compute the $\mg^{\star}$ from classical RRLDA in \eqref{eqn:ulda1} since $(\ms_{t}^{G})\Inv$ does not exist.  As a baseline for comparison, we additionally include \verb|Full|, which denotes the full data without any dimension reduction.  This results in five comparison methods for dimension reduction for each dataset and classification method: \verb|Full|, \verb|LSLDA|, \verb|ULDA|, \verb|LN Sol|, and \verb|RRLDA-RK|.    

For each of the five comparison methods and for each dataset, we draw 30 random splits of 70 percent training and 30 percent testing observations.  For each replicate, when possible, we first compute a reduced dimension subspace from the training data.  Then, we right multiply the testing subset of the design matrix $\mx$ with each of the reduced dimension subspaces described above.  Finally, we apply several classification methods, described below, to the resulting spaces.  We compare the classification accuracy and compute time required for each of the comparison methods and report the median and standard deviation for each.

We perform all accuracy and timing experiments on CPU compute nodes on the Agate high-performance computing cluster operated by the Minnesota Supercomputing Institute at the University of Minnesota.  All implementations are in Python and rely on optimized numerical linear algebra libraries for efficient use of large matrix–vector operations and iterative solvers.

\subsection{Twenty Newsgroups Dataset}

Twenty Newsgroups refers to sparse text classification data derived from $18{,}846$ news articles, each belonging to one of 20 news topics \cite{twentynewsgroups}.  The features are a term frequency-inverse document frequency (TF-IDF) representation of each article's text \cite{aizawa2003information, ramos2003using} obtained using \verb|TfidfVectorizer| in the \verb|scikit-learn| Python libraries \cite{scikit-learn}.  We employ the maximum number of features available for the TF-IDF representation, resulting in a dataset with $n=18{,}846$ observations and $p=130{,}000$ features.

Following the experimental setup in \cite{ye2007least}, we employ $k$-nearest neighbors (kNN) to assess how the learned subspaces perform on downstream classification tasks.  While \cite{ye2007least} considers $k=1$ neighbor, we additionally run experiments with $k=5$ and $k=10$ neighbors to evaluate the sensitivity of accuracy to the number of neighbors.  We employ kNN with the Euclidean norm via \verb|KNeighborsClassifier| in the \verb|scikit-learn| Python libraries \cite{scikit-learn}.

Figure \ref{fig:acc_time_20newsgroups} summarizes classification accuracy (left panel) and wall-clock compute time (right panel) across the three kNN classifiers (kNN-1, kNN-5, and kNN-10) for each comparison subspace construction method.  Since \verb|ULDA| was not able to complete any replicates due to memory issues on this dataset, we omit it from the figure.  Across all values of $k$, the baseline \verb|Full| with no dimension reduction yields the lowest accuracy and its performance decreases slightly as $k$ increases.  By contrast, all three reduced dimension representations substantially improve accuracy with \verb|LSLDA| and \verb|LN Sol| producing identical results and the highest overall performance.  \verb|RRLDA-RK| achieves accuracy that is comparable to these two methods and does so with less variance, as indicated by the smaller standard deviation bars.  While accuracy with \verb|RRLDA-RK| is typically slightly below \verb|LSLDA| and \verb|LN Sol|, it is consistently above \verb|Full| on this dataset.

The timing results reported in the right panel of Figure \ref{fig:acc_time_20newsgroups} on a logarithmic scale highlight the substantial computational differences.  Applying kNN on the full data (\verb|Full|) is extremely fast since there is no reduced dimension subspace to compute but less accurate.  While $\verb|LSLDA|$ obtains the same solution as \verb|LN Sol| on this dataset, it requires the most computational time.  This reflects the expense of employing SVD-based solvers for large-scale least-squares problems.  Meanwhile, \verb|LN Sol| is slightly faster than \verb|LSLDA| but still requires more computational time than \verb|RRLDA-RK|.  By contrast, \verb|RRLDA-RK| achieves one to two orders of magnitude speedup over \verb|LSLDA| and \verb|LN Sol|, with compute times on the order of tens of seconds, while attaining very comparable classification accuracy. 

\begin{figure}[H]
  \centering
  \includegraphics[width=\linewidth]{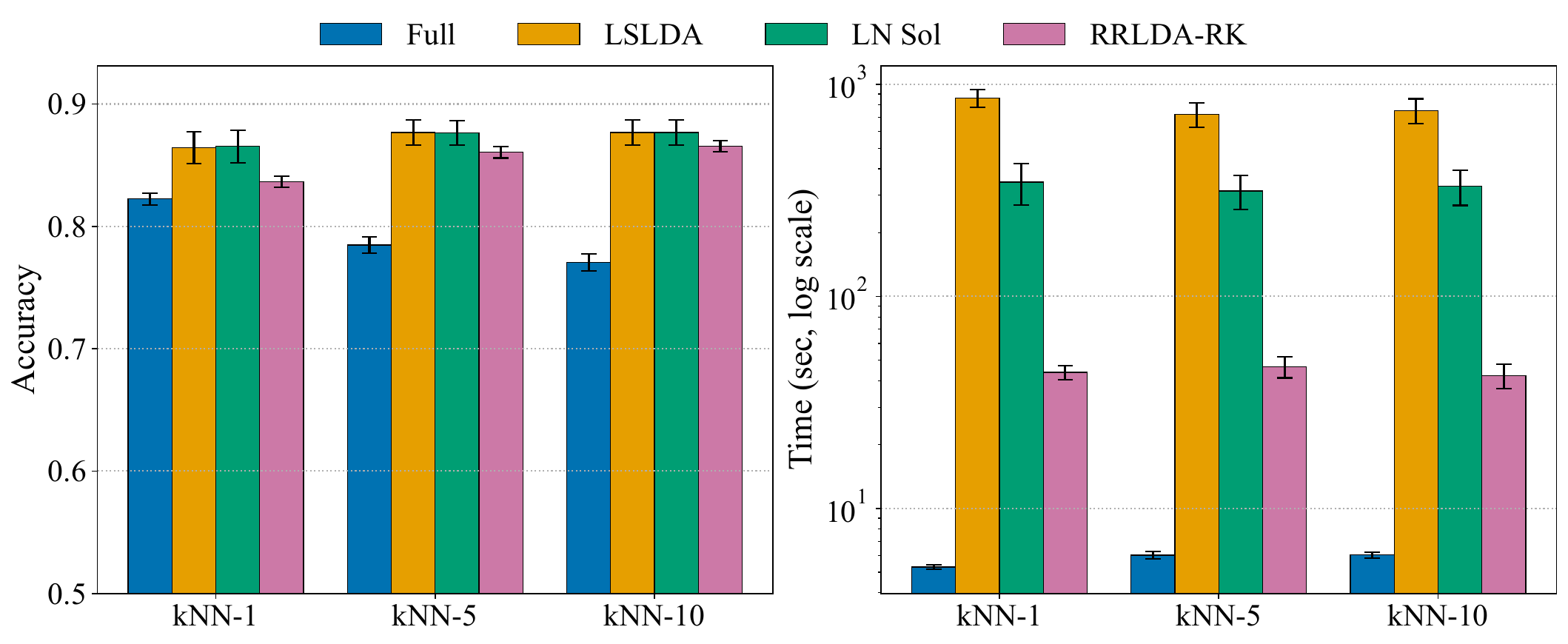}
  \caption{Accuracy (left) and timing (right) across kNN settings (knn1, knn5, knn10) for all methods. Bars indicate median accuracy and time (in log seconds); error bars indicate standard deviations across runs.}
  \label{fig:acc_time_20newsgroups}
\end{figure}

To understand why \verb|RRLDA-RK| does not obtain the same result as \verb|LNSOL| in this case, we investigate whether the training replicates drawn from the 20 Newsgroups dataset result in consistent systems in \eqref{eqn:ls}.  To do this, we obtain 10 random draws of 70 percent training data from this dataset and compute $\mw^{\star}$ with \verb|LNSOL| on each subset.  We then compute the residual $\mr^{\star}$ at $\mw^{\star}$ using $\my$ from the training data and as described in Algorithm \ref{alg:rkRRLDA}.  The minimum, median, and maximum values of $\|\mr^{\star}\|_{\text{F}}$ were 48.04, 55.5, and 60.74, respectively, with a standard deviation of 3.89.  While none of the draws resulted in consistent linear systems, the overall residual is also relatively small since the Frobenius norm was computed over $n \times p$ entries.  The relative value of $\|\mr^{\star}\|_{\text{F}}$, computed as $\frac{\|\mr^{\star}\|_{\text{F}}}{\|\my\|_{\text{F}}}$, was 0.11 with a standard deviation of 0.01.  As indicated by the second term on the right side of Proposition \ref{prop:W-inconsistent}, the nonzero residuals explain the comparable, but slightly lower, accuracy from \verb|RRLDA-RK| compared with \verb|LNSOL| on this dataset.

\subsection{OASIS Magnetic Resonance Imaging (MRI) Dataset}

The Open Access Series of Imaging Studies (OASIS-1) is a publicly available cross-sectional dataset consisting of T1-weighted MRI scans from $416$ subjects between the ages of 18 and 96 \cite{marcus2007open}.  Of the $416$ subjects, $100$ of those over the age of $60$ have a clinical diagnosis of mild or moderate Alzheimer's disease (AD).  An additional reliability set of $20$ MRI scans from non-demented patients in the dataset brings the total number of subjects to $436$ \cite{marcus2007open}.

For each subject, we select a single atlas-aligned, brain-masked, and intensity-corrected volume (indicated by the \_111\_t88\_masked\_gfc processing tag) for consistent pre-processing and improved voxel-wise comparability.  We resample all MRI volumes to a common grid with the \verb|NiBabel| \cite{nibabel} and \verb|Nilearn| \cite{nilearn} Python libraries, standardize the volumes within subject, and vectorize the resulting volumes to produce a subject $\times$ voxel design matrix.  This results in a design matrix with $n=436$ subjects and $p=1{,}741{,}740$ voxels.  We obtain diagnostic class labels from the Clinical Dementia Rating (CDR) from the demographic and clinical data to obtain two classes: those with mild or moderate dementia or AD (CDR $> 0$) and those without it (CDR $= 0$).

To assess how the learned subspaces perform on downstream classification tasks, we compare classification accuracy with several standard linear and nonlinear classifiers in the \verb|scikit-learn| Python libraries \cite{scikit-learn}. These include $k$-nearest neighbors with $k=10$ via \verb|KNeighborsClassifier|, linear support vector machines (SVM) via \verb|LinearSVC|, and logistic regression with $\ell_{2}$- and $\ell_{1}$-regularization (LR-l2 and LR-l1) via \verb|LogisticRegression| with $C=1.0$ and $\alpha=1e-4$, respectively.

Table \ref{tab:acc_oasis} reports classification accuracy across the four downstream classifiers for each comparison subspace construction method on the OASIS-1 dataset.  Accuracy with the \verb|LN Sol| subspace is consistently the highest while \verb|RRLDA-RK| and \verb|Full| produce very comparable results in general.  Meanwhile, accuracy obtained with the subspaces from \verb|LSLDA| and \verb|ULDA| is substantially worse in nearly all instances.

\begin{table}[H]
\caption{Accuracy results on OASIS-1 dataset.  Columns depict median classification accuracy ($\pm$ standard deviation) with subspaces obtained from comparison methods.  Rows indicate classifiers.  Highest accuracy in each row indicated in boldface.}
\label{tab:acc_oasis}
\centering
\resizebox{\columnwidth}{!}{%
\begin{tabular}{lrrrrr}
\toprule
 & Full & LSLDA & LN Sol & ULDA & RRLDA-RK \\
Classifier &  &  &  &  &  \\
\midrule
kNN-10 & 0.81 $\pm$ 0.03 & 0.54 $\pm$ 0.10 & \textbf{0.83} $\pm$ 0.02 & 0.77 $\pm$ 0.12 & 0.82 $\pm$ 0.03 \\
SVM & 0.82 $\pm$ 0.02 & 0.78 $\pm$ 0.23 & \textbf{0.83} $\pm$ 0.02 & 0.77 $\pm$ 0.24 & \textbf{0.83} $\pm$ 0.03 \\
LR-l2 & \textbf{0.82} $\pm$ 0.02 & 0.79 $\pm$ 0.03 & \textbf{0.82} $\pm$ 0.02 & \textbf{0.82} $\pm$ 0.03 & \textbf{0.82} $\pm$ 0.03 \\
LR-l1 & \textbf{0.81} $\pm$ 0.03 & 0.78 $\pm$ 0.03 & \textbf{0.81} $\pm$ 0.02 & 0.78 $\pm$ 0.03 & \textbf{0.81} $\pm$ 0.03 \\
\bottomrule
\end{tabular} }
\end{table}

Table \ref{tab:time_oasis} reports the corresponding compute times in seconds.  Although classification with the \verb|LN Sol| subspace produced the highest classification accuracy overall, computing this subspace also required the most compute time by far.  In fact, the time required to compute and classify the \verb|LN Sol| subspace takes approximately 45 times as long as the time required to compute and classify the \verb|RRLDA-RK| subspace.  By contrast, the \verb|RRLDA-RK| subspace achieved as high, or nearly as high, classification accuracy as \verb|LN Sol| in all instances and did so at a fraction of the compute time.  Note that classification with \verb|Full| also achieved accuracy comparable to \verb|RRLDA-RK|.  Nonetheless, in three of the four classification scenarios, it required substantially more compute time than \verb|RRLDA-RK|.

\begin{table}[H]
\caption{Time results on Oasis-1 dataset.  Columns depict median compute time in seconds ($\pm$ standard deviation) for each comparison method.  Rows indicate classifiers.  Highest accuracy in each row indicated in boldface.}
\label{tab:time_oasis}
\centering
\resizebox{\columnwidth}{!}{%
\begin{tabular}{lrrrrr}
\toprule
 & Full & LSLDA & LN Sol & ULDA & RRLDA-RK \\
Classifier &  &  &  &  &  \\
\midrule
kNN-10 & 115.72 $\pm$ 2.40 & 14.11 $\pm$ 0.50 & 453.23 $\pm$ 117.35 & 26.59 $\pm$ 1.34 & \textbf{9.95} $\pm$ 0.84 \\
SVM & 131.75 $\pm$ 210.72 & 14.07 $\pm$ 0.48 & 453.44 $\pm$ 117.34 & 26.49 $\pm$ 1.41 & \textbf{9.95} $\pm$ 0.84 \\
LR-l2 & 54.55 $\pm$ 4.71 & 14.03 $\pm$ 0.48 & 453.43 $\pm$ 117.33 & 26.52 $\pm$ 1.39 & \textbf{9.94} $\pm$ 0.85 \\
LR-l1 & 17.81 $\pm$ 0.67 & 14.07 $\pm$ 1.03 & 453.37 $\pm$ 117.36 & 27.08 $\pm$ 2.41 & \textbf{9.96} $\pm$ 0.85 \\
\bottomrule
\end{tabular} }
\end{table} 

\section{Conclusion}
\label{sec:conclusion}

In this work, we introduce a scalable randomized algorithm for performing classical and high-dimesional RRLDA on large-scale data via the randomized Kaczmarz method.  We also provide convergence analysis for the algorithm and intuition for how the algorithm imparts implicit regularization towards the least norm solution without explicitly incorporating penalties.  Our numerical experiments on real data demonstrate that \verb|RRLDA-RK| in Algorithm \ref{alg:rkRRLDA} can offer a favorable accuracy vs. computational time trade-off.  Classification with the \verb|RRLDA-RK| subspace produces accuracy that is as good, or nearly as good, as state-of-the-art approaches with potentially very large computational time savings.  This makes it particularly suited for large-scale, high-dimensional classification problems.

We employ the RK as introduced in \cite{strohmer2009randomized} and as discussed in \cite{needell2010randomized, ma2015regs}.  There are now many was to accelerate the RK and to improve on the solution in the inconsistent case.  We mention some approaches as potential avenues of future work.  These include changing the step-size \cite{censor1983strong, hanke1990acceleration, needell2013two, tanabe1971projection, whitney1967two}, sampling blocks of rows from $\mx$ instead of a single row \cite{needell2014paved, rebrova2020block, necoara2019faster}, leveraging larger singular values \cite{derezinski2025randomized}, and parallel implementations \cite{liu2014asynchronous}.  These methods could be incorporated to accelerate Algorithm \ref{alg:rkRRLDA}, making this approach particularly attractive for extremely large-scale, high-dimensional data.


\bibliographystyle{elsarticle-num} 
\bibliography{randmulticlasslda}



\end{document}